\author{D.S. Malyshev\footnote{National Research University Higher School of Economics, 25/12 Bolshaja Pecherskaja Ulitsa, Nizhny Novgorod, 603155, Russia;
Lobachevsky State University of Nizhny Novgorod, 23 Gagarina Avenue, Nizhny Novgorod, 603950, Russia; Email: dsmalyshev@rambler.ru}$~$ and O.O. Lobanova\footnote{ Lobachevsky State University of Nizhny Novgorod, 23 Gagarina Avenue, Nizhny Novgorod, 603950, Russia; Email: olga-olegov@yandex.ru}}
\date{}
\title{The coloring problem for $\{P_5,\overline{P_5}\}$-free graphs and $\{P_5,K_p-e\}$-free graphs is polynomial}
\begin{document}
\maketitle
\newtheorem{theorem}{Theorem}
\newtheorem{corollary}{Corollary}
\newtheorem{conjecture}{Conjecture}
\newtheorem{lemma}{Lemma}
\newenvironment{proof}[1][Proof]{\textbf{#1.} }{\ \rule{0.5em}{0.5em}}

\begin{abstract}

We show that determining the chromatic number of a $\{P_5,\overline{P_5}\}$-free graph or a
$\{P_5,K_p-e\}$-free graph can be done in polynomial time.\\

Keywords: computational complexity, coloring problem, hereditary class, efficient algorithm
\end{abstract}

\section{Introduction}

A \emph{coloring} is an arbitrary mapping from the set of vertices or
edges of a graph into a set of colors of the graph such that any adjacent
vertices (or edges) are colored with different colors. The minimal number of colors
sufficient for coloring a graph $G$ is said to be the \emph{chromatic number} of $G$
denoted by $\chi(G)$. The \emph{coloring problem} is to decide whether $\chi(G)\leq k$ or not
for given graph $G$ and a number $k$. A similar \emph{$k$-colorability problem} is to check
whether a given graph can be colored with at most $k$ colors. Both problems can be naturally defined
in another way via partition into independent sets.
An \emph{independent set} of graph is an arbitrary set of pairwise nonadjacent vertices. A coloring
is partitioning of vertex set of a graph into independent subsets called \emph{color classes}.

There is a natural lower bound for the chromatic number of a graph. A \emph{clique} in a graph
is a subset of pairwise adjacent vertices. The size of a maximum clique in a graph $G$ is called the
\emph{clique number} of $G$ denoted by $\omega(G)$. Clearly, $\chi(G)\geq \omega(G)$. Sometimes,
computing $\omega(G)$ helps to determine $\chi(G)$ \cite{GLS84,M15}.

A class of graphs is called \emph{hereditary} if it is closed under isomorphism and
deletion of vertices. It is well known that any hereditary (and only
hereditary) graph class ${\mathcal X}$ can be defined by a set of
its forbidden induced subgraphs ${\mathcal S}$. We write ${\mathcal
X}=Free({\mathcal S})$ in this case, and graphs in ${\mathcal X}$ are said to
be \emph{${\mathcal S}$-free}. If ${\mathcal S}=\{G\}$, then we write ''$G$-free'' instead of ''$\{G\}$-free''.

We say that ${\mathcal X}$ is \emph{easy} for the coloring problem if ${\mathcal X}$ is hereditary and the problem can be
polynomially solved for it.

The computational complexity of the coloring problem was completely determined for
all classes of the form $Free(\{G\})$ \cite{KKTW01}. A study of forbidden
pairs was also initiated in \cite{KKTW01}. A complete complexity dichotomy appeared
hard to obtain even in the cases of two four-vertex and connected five-vertex forbidden
induced subgraphs \cite{LM15,M14}. For all but three cases either NP-completeness or polynomial-time
solvability was shown in the family of hereditary classes defined by four-vertex forbidden induced structures \cite{LM15}.
The remaining three classes are stubborn. A similar result was obtained in \cite{M14} for two
connected five-vertex forbidden induced fragments, where the number of open cases was 13. Recently,
it was reduced to 11 \cite{M15}. We reduce the number to nine by showing that the coloring problem can
be solved for $\{P_5,\overline{P_5}\}$-free and $\{P_5,K_p-e\}$-free graphs in polynomial time.

\section{Notation}

As usual, $P_n,C_n,O_n,K_n$ stand respectively for the simple path,
the chordless cycle, the empty graph, the complete graph with $n$ vertices respectively. A graph $K_p-e$
is obtained from $K_p$ by deleting an arbitrary edge. A formula $N(x)$ means the neighborhood of a vertex
$x$ of some graph. For a graph $G$ and a set $V'\subseteq V(G)$, $G(V')$ denotes the subgraph of $G$ induced by $V'$.

We refer to textbooks in graph theory for any graph
terminology undefined here.

\section{Auxilliary results}

\subsection{Decomposition by clique separators and its applications to the coloring problem}

A \emph{clique separator} in a graph is a clique whose removal increases the
number of connected components. For example, the graph $K_p-e$ has a clique separator
with $p-2$ vertices. If a graph $G$ has a clique separator $Q$, then $V(G)\setminus Q$ can
be arbitrarily partitioned into nonempty subsets $A$ and $B$ such that no vertex of $A$
is adjacent to a vertex of $B$. Let $G_1\triangleq G(A\cup Q)$ and $G_2\triangleq G(B\cup Q)$.
We repeat a similar decomposition until no further decomposition is possible. The whole process
can be represented by a binary decomposition tree whose leaves correspond to some induced subgraphs of $G$
without clique separators. There exists an $O(mn)$-time algorithm for constructing some binary decomposition
tree for any graph with $n$ vertices and $m$ edges \cite{T85}.

\begin{lemma}
For each graph $G$, $\chi(G)=\max(\chi(G_1),\chi(G_2))$.
\end{lemma}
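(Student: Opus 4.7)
The plan is to prove the two inequalities $\chi(G)\geq\max(\chi(G_1),\chi(G_2))$ and $\chi(G)\leq\max(\chi(G_1),\chi(G_2))$ separately. The first is immediate from the fact that $G_1$ and $G_2$ are induced subgraphs of $G$ and the chromatic number is monotone under taking induced subgraphs, so any proper coloring of $G$ restricts to proper colorings of $G_1$ and $G_2$.

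For the nontrivial direction, I would let $k=\max(\chi(G_1),\chi(G_2))$ and start from optimal proper colorings $c_1$ of $G_1$ and $c_2$ of $G_2$, each using at most $k$ colors from the palette $\{1,\dots,k\}$. The crucial observation is that $Q$ is a clique common to both subgraphs, so in each of $c_1$ and $c_2$ the vertices of $Q$ receive $|Q|$ pairwise distinct colors. In particular $|Q|\leq k$, and there exists a permutation $\pi$ of $\{1,\dots,k\}$ such that $\pi\circ c_2$ agrees with $c_1$ on $Q$; replacing $c_2$ by $\pi\circ c_2$, which is again a proper $k$-coloring of $G_2$, I may assume $c_1|_Q = c_2|_Q$.

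Next I would glue the two colorings by defining $c(v)=c_1(v)$ for $v\in A\cup Q$ and $c(v)=c_2(v)$ for $v\in B$. This is well defined on $Q$ by the previous step. To verify that $c$ is proper, I would examine any edge $uv$ of $G$: if both endpoints lie in $A\cup Q$, properness follows from $c_1$; if both lie in $B\cup Q$, from $c_2$; the remaining possibility, namely $u\in A$ and $v\in B$, is excluded by the defining property of the decomposition that no vertex of $A$ is adjacent to any vertex of $B$. Hence $c$ is a proper coloring of $G$ using at most $k$ colors, giving $\chi(G)\leq k$.

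The only place where anything subtle happens is the color-matching on $Q$; the rest is bookkeeping. I do not anticipate a real obstacle, since the construction of $\pi$ is just a bijection between the $|Q|$-element sets $c_1(Q)$ and $c_2(Q)$ extended arbitrarily to the remaining colors.
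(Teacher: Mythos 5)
Your proof is correct and follows essentially the same route as the paper: both arguments exploit that $Q$ is a clique, hence rainbow-colored in any proper coloring, to permute the colors of one optimal coloring so that it agrees with the other on $Q$, and then glue using the absence of edges between $A$ and $B$. Your write-up is in fact somewhat more explicit than the paper's about the permutation and the edge case analysis.
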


\begin{proof}
Without loss of generality, $\chi(G_1)\leq \chi(G_2)$. Let us consider a partial coloring
of $G$ induced by an optimal coloring of $G_2$ and color classes of $G_1$ in its optimal coloring containing all vertices of $Q$.
These color classes can be colored with colors assigned to elements of $Q$ in the partial coloring. To color the remaining part of $A$, it
is enough $\chi(G_1)-|Q|$ colors distinct to the colors of $Q$. The set $B$ has $\chi(G_2)-|Q|\geq \chi(G_1)-|Q|$
color classes with colors of this type. Hence, $G$ can be colored with $\chi(G_2)$ colors. So, $\chi(G)=\chi(G_2)$.
\end{proof}

A maximal induced subgraph of a given graph without proper clique separators will be called a \emph{$C$-block} of the graph.  Leaves of a decomposition tree of any graph correspond to its $C$-blocks. Let ${\mathcal X}$ be a class of
graphs. The set of all graphs whose every $C$-block belongs to ${\mathcal X}$ will be called the
\emph{$C$-closure of ${\mathcal X}$} denoted by $[{\mathcal X}]_C$.

\begin{theorem}
If ${\mathcal X}$ is easy for the coloring problem, then it is so for $[{\mathcal X}]_C$.
\end{theorem}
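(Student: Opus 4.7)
The plan is to combine Lemma 1 with Tarjan's decomposition algorithm cited in \cite{T85}, and then invoke the polynomial coloring algorithm for $\mathcal{X}$ on each leaf of the resulting tree.

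First I would run Tarjan's $O(mn)$ procedure on the input graph $G\in[\mathcal{X}]_C$ to produce a binary clique-separator decomposition tree $T$. By construction, every leaf of $T$ corresponds to a $C$-block of $G$, and since $G\in[\mathcal{X}]_C$, each such $C$-block lies in $\mathcal{X}$. Since $\mathcal{X}$ is easy, I can compute $\chi$ of each leaf graph in polynomial time using the hypothetical algorithm for $\mathcal{X}$.

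Next I would process $T$ bottom-up. For every internal node, Lemma 1 asserts that the chromatic number of the graph at that node equals the maximum of the chromatic numbers of its two children (the graphs $G_1$ and $G_2$ obtained from the clique separator used at that node). A straightforward induction on the depth of $T$ therefore shows that $\chi(G)$ equals the maximum of $\chi(H)$ taken over all $C$-blocks $H$ of $G$, which has already been computed in the previous step. Returning this maximum solves the coloring problem for $G$.

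The only potential obstacle is bookkeeping: we must verify that the total combined size of all $C$-blocks, and hence the overall running time, is polynomial in $|V(G)|$ and $|E(G)|$. This is guaranteed by the fact that Tarjan's decomposition tree has $O(n)$ leaves and total size polynomial in the size of $G$, with construction time $O(mn)$ \cite{T85}; consequently, invoking the polynomial algorithm for $\mathcal{X}$ on each leaf and taking a maximum over leaves runs in polynomial time overall. Thus $[\mathcal{X}]_C$ is easy for the coloring problem.
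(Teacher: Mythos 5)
Your proposal is correct and follows essentially the same route as the paper: build Tarjan's clique-separator decomposition tree, solve each $C$-block leaf with the algorithm for $\mathcal{X}$, and propagate via Lemma 1 that the chromatic number of $G$ is the maximum over its $C$-blocks. The only detail the paper mentions that you omit is the (trivial) observation that $[\mathcal{X}]_C$ is itself hereditary, which is required by the paper's definition of ``easy.''
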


\begin{proof} Clearly, $[{\mathcal X}]_C$ is hereditary. All $C$-blocks of a graph $G\in [{\mathcal X}]_C$
belong to ${\mathcal X}$, and the coloring problem can be solved in polynomial time for them.
A decomposition tree for $G$ can be constructed in polynomial time. Hence, by the previous lemma, $[{\mathcal X}]_C$ is
easy for the coloring problem.
\end{proof}

\subsection{Modular decomposition and its applications to the weighted coloring problem}

A set $M\subseteq V(G)$ is a \emph{module} in a graph $G$ if either $x$ is adjacent to all elements of $M$ or
none of them for each $x\in V(G)\setminus M$. Each vertex of $G$ and the set $V(G)$ constitute a module called \emph{trivial}.
A module $M$ is a \emph{nontrivial module} in $G$ if $|M|>1$ and $M\neq V(G)$. A graph containing no nontrivial modules is said to
be \emph{prime}. For instance, $P_4$ is prime and $C_4$ does not.

Modular decomposition of graphs is an algorithmic technique based on the following decomposition theorem due to T. Gallai.

\begin{theorem} \cite{G67}
Let $G$ be a graph with at least two vertices. Then exactly one of the following conditions holds:

$(1).$ $G$ is not connected

$(2).$ $\overline{G}$ is not connected

$(3).$ $G$ and $\overline{G}$ are connected, and there is a set $V'$ with at least four elements and an unique partition $P(G)$ of $V(G)$
such that

$(a).$ $G(V')$ is a maximal prime induced subgraph of $G$

$(b).$ for each $V''\in P(G)$, $V''$ is a module (perhaps, trivial) in $G$ and $|V''\cap V'|=1$.
\end{theorem}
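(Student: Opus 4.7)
The plan is to treat the trichotomy by first checking mutual exclusivity and then constructing the partition and prime subgraph in case (3). Mutual exclusivity is immediate: a graph and its complement cannot both be disconnected, and case (3) explicitly assumes connectedness of both. Hence the real content is building the data in (3) when both $G$ and $\overline{G}$ are connected.

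Under those connectedness hypotheses, I would define $P(G)$ as the collection of \emph{maximal proper modules} of $G$, meaning modules distinct from $V(G)$ that are inclusion-maximal among such modules. Every vertex lies in some maximal proper module (the singleton it generates is a candidate), so the elements of $P(G)$ cover $V(G)$. The critical step is pairwise disjointness. For this I would use the classical overlap lemma: if $M_1, M_2$ are modules with $M_1 \cap M_2 \neq \emptyset$, then $M_1 \cup M_2$ and $M_1 \cap M_2$ are also modules. If two elements of $P(G)$ overlap without one containing the other, maximality forces $M_1 \cup M_2 = V(G)$; then I would split $V(G) \setminus M_1$ and $V(G) \setminus M_2$ according to adjacency to $M_1$, respectively $M_2$, and assemble these adjacency classes with $M_1 \cap M_2$ to produce a proper module strictly larger than $M_1$ or $M_2$, contradicting maximality. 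Both the connectedness of $G$ and of $\overline{G}$ enter precisely to guarantee that the required adjacency classes are nonempty.

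Once $P(G)$ is known to partition $V(G)$, I would pick one representative from each part to form $V'$. Primeness of $G(V')$ follows by lifting: any nontrivial module $S \subsetneq V'$ lifts to the union of the corresponding parts of $P(G)$, which is itself a module of $G$ properly containing some $M \in P(G)$ and still distinct from $V(G)$, contradicting maximality. Maximality of $G(V')$ as a prime induced subgraph follows because adding any $w \notin V'$ introduces the nontrivial module $\{v,w\}$ in $G(V' \cup \{w\})$, where $v \in V'$ is the representative of the part containing $w$. Uniqueness of $P(G)$ is built into the definition.

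Finally, to confirm $|V'| \geq 4$, I would exclude small partitions by direct inspection: $|P(G)| = 1$ contradicts properness of its element; $|P(G)| = 2$ forces $G$ to be a disjoint union or a join of two modules, violating the assumption that both (1) and (2) fail; and $|P(G)| = 3$ produces a three-vertex quotient, but every three-vertex graph has a nontrivial module, and that module would lift to one properly containing some $M \in P(G)$. The principal obstacle is the overlap analysis in the second paragraph, where the two connectedness hypotheses must both be deployed to force the contradiction; once the partition claim is secured, the remaining verifications are short routine checks against the module definition.
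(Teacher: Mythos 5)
The paper does not prove this statement at all: it is Gallai's modular decomposition theorem, imported verbatim with the citation \cite{G67}, so there is no in-paper argument to compare yours against. Judged on its own, your proposal follows the standard proof and its skeleton is sound: maximal proper modules, the overlap lemma to get disjointness, lifting of modules to show the quotient is prime, and the small-quotient analysis to get $|V'|\geq 4$.

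Two steps need tightening. First, your overlap argument is vaguer than it needs to be, and the clean version makes the role of both connectedness hypotheses transparent: if $M_1,M_2$ are overlapping maximal proper modules, then $M_1\cup M_2=V(G)$ by maximality, and the three sets $M_1\setminus M_2$, $M_2\setminus M_1$, $M_1\cap M_2$ are nonempty modules partitioning $V(G)$ with uniform (complete or empty) adjacency between each pair; the resulting three-vertex quotient is disconnected or has disconnected complement, which pulls back to $G$ or $\overline{G}$ being disconnected. Second, your maximality argument for $G(V')$ only rules out one-vertex extensions, but primality is not monotone under taking induced subgraphs, so non-extendability by a single vertex does not by itself rule out a prime $G(W)$ with $W\supsetneq V'$ of larger size; moreover, if $W$ contains several vertices of the same part $M$, the pair $\{v,w\}$ you exhibit need not be a module of $G(W)$. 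The fix is to observe that for any $W\supsetneq V'$ and any $w\in W\setminus V'$ lying in the part $M$, the set $M\cap W$ is a module of $G(W)$ containing both $w$ and the representative of $M$, and is proper in $W$ because $V'$ meets at least four distinct parts; hence $G(W)$ is never prime. Finally, the uniqueness clause deserves more than "built into the definition": you should argue that any partition satisfying (b) with a prime transversal must be the partition into maximal proper modules (two parts inside a common proper module would yield a nontrivial module of the transversal). With these repairs the argument is a complete and correct proof of the cited theorem.
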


By the theorem, there are decomposition operations of three types. First, if $G$ is not connected, then
disconnect it into connected components $G_1,\ldots,G_p$. Second, if $\overline{G}$ has connected components $\overline{G_1},\ldots,\overline{G_q}$,
then decompose $G$ into $G_1,\ldots,G_q$. At length, if $G$ and $\overline{G}$ are connected,
then its maximal modules are pairwise disjoint, and they form the partition $P(G)$. The graph $G$ is decomposed into subgraphs in
$\{G(V'')|~V''\in P(G)\}$. Additionally, each class of $P(G)$ is contracted
to obtain a graph which is isomorphic to $G(V')$. In other words, $G(V')$
is an induced subgraph of $G$ producing by taking one element in each class of $P(G)$.

The decomposition process above can be represented by an uniquely determined tree called
the \emph{modular decomposition tree} of $G$. Its vertices are induced subgraphs of $G$.
A vertex $G$ has the connected components of $G$ or $\overline{G}$ as the children in the
first two cases; the children are subgraphs of the form $G(V''), V''\in P(G)$ in the third one.
Moreover, we associate the graph $G(V'')$ with the vertex $G$. The modular decomposition tree
can be determined in $O(n+m)$-time for any graph with $n$ vertices and $m$ edges \cite{CH05}.

The \emph{weighted coloring problem} is to find, for given $G$ and a function $w: V(G)\rightarrow {\mathbb N}$, the smallest number $k$
such that there is a function $c: V(G)\rightarrow 2^{\{1,2,\ldots,k\}}$ such that $|c(v)|=w(v)$ for any $v$ and $c(v_1)\cap c(v_2)=\emptyset$ for
any adjacent $v_1$ and $v_2$. The elements of $c(v)$ are called the \emph{colors of} $v$. This $k$ is denoted by $\chi_w(G)$ and called the \emph{weighted chromatic number} of $G$. For every graph $G$, $\chi_{w'}(G)=\chi(G)$, where $w'$ maps every vertex to 1.

Clearly, for each function $w$, we have $\chi_w(G)=\max\limits_{i}(\chi_w(G_i))$, where $G_1,\ldots,G_p$ are connected components of $G$.
Similarly, if $\overline{G_1},\ldots,\overline{G_q}$ are connected components of $\overline{G}$, then $\chi_w(G)=\sum\limits_{i=1}^{q}\chi_w(G_i)$.

\begin{lemma} Let $G$ be a graph, $P(G)$ be its modular decomposition, $w: V(G)\rightarrow {\mathbb N}$ be an arbitrary function. Then $\chi_w(G)=\chi_{w^{*}}(G(V'))$,
where $w^{*}(v)=\chi_w(G(V''))$ for each $v\in V', V''\in P(G), \{v\}=V'\cap V''$.
\end{lemma}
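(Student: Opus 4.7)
The plan is to establish the equality by two matching inequalities, using the defining property of modules: if $V_1'', V_2'' \in P(G)$ are distinct, then either every vertex of $V_1''$ is adjacent to every vertex of $V_2''$ (the case when the corresponding representatives in $V'$ are adjacent in $G(V')$) or no vertex of $V_1''$ is adjacent to any vertex of $V_2''$.

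First I would prove $\chi_w(G)\leq \chi_{w^{*}}(G(V'))$ by lifting an optimal weighted coloring of the quotient. Take an optimal coloring $c^{*}$ of $G(V')$ with weight $w^{*}$, so that for every $v\in V'$ the set $c^{*}(v)$ has exactly $\chi_w(G(V''))$ colors, where $V''$ is the class of $P(G)$ containing $v$. Since $G(V'')$ admits a weighted coloring with $w$ using exactly $\chi_w(G(V''))$ colors, relabel those colors bijectively with the colors in $c^{*}(v)$ and use the resulting coloring inside $V''$. The coloring of $G$ obtained this way is proper: inside each module it is proper by construction, and for two distinct modules $V_1'', V_2''$, edges between them only exist when the representatives are adjacent in $G(V')$, in which case the palettes $c^{*}(v_1)$ and $c^{*}(v_2)$ are disjoint. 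Hence $\chi_w(G)\leq \chi_{w^{*}}(G(V'))$.

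For the reverse inequality, start from an optimal weighted coloring $c$ of $(G,w)$ using $\chi_w(G)$ colors, and let $C(V'')$ denote the set of colors appearing on the vertices of $V''$. The restriction of $c$ to $V''$ is a valid weighted coloring of $G(V'')$, so $|C(V'')|\geq \chi_w(G(V''))=w^{*}(v)$, where $v$ is the representative of $V''$ in $V'$. Choose an arbitrary subset $c^{*}(v)\subseteq C(V'')$ of size exactly $w^{*}(v)$. For adjacent representatives $v_1,v_2\in V'$, the modules $V_1''$ and $V_2''$ are completely joined in $G$, so the color sets $C(V_1'')$ and $C(V_2'')$ are disjoint, and hence so are $c^{*}(v_1)$ and $c^{*}(v_2)$. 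Thus $c^{*}$ is a weighted coloring of $G(V')$ with weight $w^{*}$ that uses at most $\chi_w(G)$ colors, yielding $\chi_{w^{*}}(G(V'))\leq \chi_w(G)$.

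No step looks like a serious obstacle; the only subtlety is ensuring, in the lifting direction, that arbitrary recoloring inside a module does not interfere with the coloring of neighbouring modules, which is exactly what the module property grants. The second direction is purely a counting/set-selection argument once one observes that the ``palette'' a module consumes in any proper coloring is forced to contain at least $\chi_w(G(V''))$ colors and is forced to be disjoint from the palette of any completely adjacent module.
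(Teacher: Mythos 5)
Your proof is correct and follows essentially the same route as the paper's: both directions are obtained by passing between colorings of $G$ and of the quotient $G(V')$, using the fact that distinct modules of $P(G)$ are either completely joined or completely non-adjacent. The paper's version is just a terser statement of the same two inequalities, so nothing further is needed.
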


\begin{proof}
Contraction of $V''$ to $v$ and assignment $w(v)=\chi_w(G(V''))$ produces a subgraph whose weighted
chromatic number is at most $\chi_w(G)$. On the other hand, each element of $N(v)$ cannot have some $\chi_w(G(V''))$
colors of $v$. Hence, the weighted chromatic number of the subgraph is equal to $\chi_w(G)$. Therefore, $\chi_w(G)=\chi_{w^{*}}(G(V'))$.
\end{proof}

Let $[{\mathcal X}]_P$ be the set of graphs whose every prime induced subgraph
belongs to ${\mathcal X}$. Clearly, $[{\mathcal X}]_P$ is hereditary whenever ${\mathcal X}$ is hereditary.
The theorem below follows from the previous lemma and \cite{CH05}.

\begin{theorem}
If ${\mathcal X}$ is an easy class for the coloring problem, then it is so for $[{\mathcal X}]_P$.
\end{theorem}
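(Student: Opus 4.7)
The plan is to reduce the coloring problem on $G\in[\mathcal{X}]_P$ to polynomially many instances of weighted coloring on prime induced subgraphs of $G$, each of which belongs to $\mathcal{X}$ by hypothesis. First, I would compute the modular decomposition tree of $G$ in $O(n+m)$ time using the algorithm of \cite{CH05}. The tree is then processed bottom up, maintaining at each node the weighted chromatic number of the corresponding induced subgraph, starting with the uniform weight $w\equiv 1$ at the leaves so that the value at the root equals $\chi(G)$.

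At each internal node I would apply the appropriate recurrence: a maximum over the children at a parallel (disconnected) node and a sum at a series (co-disconnected) node, as already recorded in the paragraph preceding the previous lemma, and at a prime node with quotient $Q$ the identity $\chi_w(G)=\chi_{w^{*}}(Q)$ provided by that lemma, where $w^{*}$ assigns to each vertex of $Q$ the weighted chromatic number already computed for the corresponding child. Because $G\in[\mathcal{X}]_P$ and $Q$ is a prime induced subgraph of $G$, the quotient $Q$ belongs to $\mathcal{X}$, so the polynomial-time algorithm promised by the hypothesis is available on $Q$.

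The main obstacle is that the hypothesis guarantees only an unweighted coloring algorithm on $\mathcal{X}$, whereas the recursion above calls for a weighted version at every prime node. To handle this I would observe that throughout the bottom-up pass every weight stays bounded by $n$, since any weighted chromatic number is at most the number of vertices of the graph being colored. The weighted instance at a prime quotient $Q$ can therefore be converted in polynomial time to an unweighted instance by the standard substitution operation: replace each vertex $v\in V(Q)$ by a clique of size $w^{*}(v)$, inheriting all cross-adjacencies from $Q$. The blown-up graph $Q^{+}$ has at most $n^2$ vertices and satisfies $\chi(Q^{+})=\chi_{w^{*}}(Q)$. A short module argument — any two copies of the same blown-up vertex form a nontrivial module in any larger induced subgraph — shows that every prime induced subgraph of $Q^{+}$ is isomorphic to an induced subgraph of $Q$ and hence lies in $\mathcal{X}$, so $Q^{+}\in[\mathcal{X}]_P$. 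Closing the loop is then immediate: the modular decomposition of $Q^{+}$ is one level deep, its prime pieces already belong to $\mathcal{X}$, and one further invocation of the $\mathcal{X}$-algorithm (or the standard observation that polynomial-time coloring algorithms on hereditary classes extend to polynomially bounded weights) completes the computation of $\chi(G)$.
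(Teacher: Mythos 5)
Your overall scheme --- build the modular decomposition tree, evaluate it bottom-up with maxima at parallel nodes, sums at series nodes, and Lemma 3 at prime nodes --- is exactly the route the paper intends (its own ``proof'' is the one-sentence assertion that the theorem follows from the previous lemma and \cite{CH05}). You also correctly isolate the one nontrivial point, namely that the recursion demands the \emph{weighted} chromatic number of each prime quotient $Q\in{\mathcal X}$, while the hypothesis only supplies an unweighted algorithm. However, your resolution of that point is circular. Replacing each $v\in V(Q)$ by a clique of size $w^{*}(v)$ does give a graph $Q^{+}$ with $\chi(Q^{+})=\chi_{w^{*}}(Q)$, but $Q^{+}$ does not in general belong to ${\mathcal X}$: a hereditary class is closed under vertex deletion, not under substitution, so the ${\mathcal X}$-algorithm cannot be invoked on $Q^{+}$. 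What your module argument actually shows is $Q^{+}\in[{\mathcal X}]_P$, i.e.\ $Q^{+}$ lies in precisely the class whose easiness is the statement being proved; and applying Lemma 3 to $Q^{+}$ (whose modular decomposition is one level deep, with quotient $Q$ and clique modules) merely hands back the original task of computing $\chi_{w^{*}}(Q)$. The parenthetical ``standard observation'' that a polynomial-time coloring algorithm on a hereditary class extends to polynomially bounded weights is exactly the missing claim, and it is not a known general fact.

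To be fair, the paper does not close this gap either: as the theorem is actually used in the main result, the argument requires the stronger hypothesis that ${\mathcal X}$ is easy for the \emph{weighted} coloring problem, which is why the paper invokes the result of \cite{GLS84} that perfect graphs are easy for weighted coloring rather than merely for coloring. If you strengthen the hypothesis in that way, your bottom-up argument goes through verbatim and the blow-up detour becomes unnecessary.
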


\subsection{Bipartite Ramsey theorem}

A famous Ramsey theorem claims that any graph has a sufficiently large independent set or a
sufficiently large clique. There are numerous its analogues for different classes of graphs, e.g. for
bipartite graphs. Recall that a graph is \emph{bipartite} if its vertex set can be partitioned into at most two independent sets.
These independent sets are called \emph{parts}. A \emph{matching} in a graph is a subset of pairwise nonadjacent edges.
The following result is a corollary of theorem 2 from \cite{EHP00} for $H=K_{s,s}$.

\begin{lemma}
Any bipartite graph $G$ having parts $A$ and $B$ with $n>s^{s+1}$ vertices contains subsets $A'\subseteq A, B'\subseteq B, |A'|=|B'|=\lfloor(\frac{n}{s})^{\frac{1}{s}}\rfloor$ such that $G(A'\cup B')$ is empty or complete bipartite.
\end{lemma}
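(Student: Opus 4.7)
The plan is to recast the conclusion as a bipartite two-colour Ramsey statement. Regard the edges of the complete bipartite graph $K_{A,B}$ as $2$-edge-coloured by letting edges of $G$ receive colour $1$ and non-edges of $G$ (that is, pairs of $A\times B$ not joined in $G$) receive colour $2$. A pair $A'\subseteq A$, $B'\subseteq B$ induces a complete bipartite subgraph of $G$ exactly when every edge between $A'$ and $B'$ has colour $1$, and induces an empty subgraph exactly when every such edge has colour $2$. Hence the desired conclusion is equivalent to the existence of a monochromatic $K_{t,t}$ with $t=\lfloor(n/s)^{1/s}\rfloor$ in this $2$-colouring, and the lemma is a bipartite Ramsey statement in disguise.

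My first move would be to invoke Theorem~$2$ of \cite{EHP00} with $H=K_{s,s}$, as the author announces. That result is precisely of the form needed: given a $2$-edge-coloured $K_{n,n}$ with $n$ sufficiently large relative to $s$, it outputs a monochromatic copy of $K_{t,t}$ whose size is quantified explicitly in terms of $n$ and $s$. Two numerical verifications are then required: that the threshold demanded by \cite{EHP00} is already implied by $n>s^{s+1}$, and that the guaranteed size $t$ is at least $\lfloor(n/s)^{1/s}\rfloor$. Both are calibration checks against the cited theorem rather than new combinatorics.

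If the direct invocation did not give exactly the stated constants, a self-contained fallback would be the K\H{o}v\'ari–S\'os–Tur\'an theorem. Assuming for contradiction that neither colour class contains a $K_{t,t}$, each class has at most roughly $(t-1)^{1/t}n^{2-1/t}+(t-1)n$ edges in $K_{n,n}$; since the two classes cover all $n^{2}$ edges, doubling this bound must be at least $n^{2}$. Plugging in $t=\lfloor(n/s)^{1/s}\rfloor$, which satisfies $t^{s}\le n/s$, and using the hypothesis $n>s^{s+1}$ to absorb lower-order terms, should produce the required contradiction and thus a monochromatic $K_{t,t}$.

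The main obstacle I expect is exactly this bookkeeping: the threshold $s^{s+1}$ and the size $\lfloor(n/s)^{1/s}\rfloor$ are finely tuned, so the real work lies in matching them either to the parameters appearing in \cite{EHP00} or to the algebra of the K\H{o}v\'ari–S\'os–Tur\'an estimate. The structural content — that the problem reduces, in one line, to a bipartite Ramsey statement about the two colour classes induced by $E(G)$ and its complement inside $A\times B$ — is straightforward; everything beyond it is numerics.
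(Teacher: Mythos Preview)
Your proposal is correct and matches the paper's approach exactly: the paper does not give a proof at all but simply states the lemma as a corollary of Theorem~2 of \cite{EHP00} with $H=K_{s,s}$, which is precisely your primary move. Your framing via the $2$-colouring of $K_{A,B}$ and the K\H{o}v\'ari--S\'os--Tur\'an fallback are reasonable elaborations, but the paper itself offers nothing beyond the one-line citation you already identified.
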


\subsection{Connected $\{P_5,K_p-e\}$-free graphs without clique separators}

Let $G$ be a connected $\{P_5,K_p-e\}$-free graph ($p\geq 3$) without clique separators, and let $Q$ be its
maximum clique.

\begin{lemma}
The graph $G$ is $O_3$-free or $|Q|\leq (p+1)^{p+2}(p-2)$.
\end{lemma}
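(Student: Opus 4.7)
The plan is to argue by contradiction: assume $G$ contains an independent triple $\{x_1, x_2, x_3\}$ and $|Q| > (p+1)^{p+2}(p-2)$. Set $R := V(G) \setminus Q$.

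The first preliminary is a bounded-attachment bound: every $v \in R$ satisfies $|N_Q(v)| \leq p - 3$. Indeed, if $|N_Q(v)| \geq p - 2$, then either $Q \cup \{v\}$ is a clique of size $|Q| + 1$ (contradicting the maximality of $Q$), or for any $b \in Q \setminus N_Q(v)$ and any $(p-2)$-subset $A' \subseteq N_Q(v)$, the vertices $\{v, b\} \cup A'$ induce $K_p - e$. Since at most one vertex of $\{x_1, x_2, x_3\}$ lies in the clique $Q$, at least two of them, call them $y$ and $z$, lie in $R$ and are nonadjacent. By the preliminary, $|N_Q(y) \cup N_Q(z)| \leq 2(p-3)$, so $Q^\star := Q \setminus (N_Q(y) \cup N_Q(z))$ has cardinality exceeding $(p+1)^{p+2}$, and every $q' \in Q^\star$ produces a new independent triple $\{q', y, z\}$ with $q' \in Q$.

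For each $q' \in Q^\star$, since $G$ is $P_5$-free, $\mathrm{dist}_G(q', y), \mathrm{dist}_G(q', z) \leq 3$. The associated length-$2$ or length-$3$ paths invoke one or two ``witness'' vertices; by the preliminary, every witness in $R$ attaches to $Q$ via at most $p - 3$ vertices. Classifying each $q' \in Q^\star$ by the witness signature (its attachment size to $Q$ and its adjacency status with $y$ and $z$, which takes one of $p - 2$ possible values in the critical coordinate), we pigeonhole to obtain a subset $C \subseteq Q^\star$ of size greater than $(p+1)^{p+2}$ on which the signature is uniform. I would then apply Lemma 3 (Bipartite Ramsey) with $s = p + 1$ to an auxiliary bipartite graph derived from $C$ and its common witnesses, yielding either a complete or an empty bipartite subgraph on at least $p + 1$ vertices per side. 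In the complete-bipartite case the edges combine with $y$, $z$, and the shared witnesses to produce an induced $K_p - e$; in the empty-bipartite case the non-edges combine with the path structure (with the no-clique-separator hypothesis guaranteeing a connecting path through $R$) to produce an induced $P_5$. Either outcome contradicts the hypotheses, so $|Q| \leq (p+1)^{p+2}(p-2)$.

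The main obstacle will be the construction of the bipartite graph in the Ramsey step so that \emph{both} halves of the dichotomy produce one of the two forbidden subgraphs; the case analysis on whether $N_Q(y)$ and $N_Q(z)$ are empty, disjoint, or overlap will likely force the argument to split into subcases, each treated by its own bipartite construction. Numerically, $(p+1)^{p+2} = s^{s+1}$ with $s = p+1$ supplies the Ramsey threshold, and the multiplier $(p-2)$ absorbs the pigeonhole over possible witness signatures.
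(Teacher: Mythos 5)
There is a genuine gap: your write-up is a plan whose decisive step is left unexecuted, and you say so yourself (``the main obstacle will be the construction of the bipartite graph in the Ramsey step so that both halves of the dichotomy produce one of the two forbidden subgraphs''). Everything up to that point is sound but easy --- the attachment bound $|N_Q(v)|\leq p-3$, the choice of two nonadjacent $y,z\notin Q$, the large set $Q^{\star}$, and the diameter-$3$ observation all check out --- but none of it yet contradicts anything. The entire content of the lemma lives in the step you defer: you never specify the auxiliary bipartite graph, never pin down the adjacencies of the ``witnesses'' to $y$, $z$ and to each other, and never exhibit the five (resp.\ $p$) vertices whose induced subgraph is $P_5$ (resp.\ $K_p-e$). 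In the empty-bipartite branch in particular, producing an induced $P_5$ requires controlling \emph{all} ten pairs among the chosen vertices, and a ``witness signature'' recording only attachment size and adjacency to $y,z$ does not determine the witnesses' adjacencies to the $Q^{\star}$-vertices or to one another, which is exactly where the work is.

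It is also worth noting that the paper's proof is organized quite differently, and the difference points at what your scheme is missing. The paper does not fix an independent triple at all: assuming $|Q|>(p+1)^{p+2}(p-2)$, it applies the bipartite Ramsey lemma to the ($K_{p-2,p-2}$-free) bipartite graph between $Q$ and $N(Q)$ to extract $k\geq p+1$ vertices $u_1,\dots,u_k\in Q$ with \emph{private} neighbors $v_1,\dots,v_k\in N(Q)$, shows $\{v_1,\dots,v_k\}$ must be a clique or independent (else $P_5$), rules out the independent case because it would force $Q$ to be a clique separator, and then shows $N(Q)$ is a clique containing all of $V(G)\setminus Q$ (else $N(Q)$ is a clique separator) --- so $V(G)$ is a union of two cliques and $G$ is $O_3$-free. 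The no-clique-separator hypothesis is thus the engine of the argument, used twice to force global structure; in your sketch it appears only as a throwaway remark about ``a connecting path through $R$.'' Unless you can supply the concrete bipartite construction and the explicit forbidden-subgraph certificates for both Ramsey outcomes, the proposal does not establish the lemma.
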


\begin{proof}
Assume that $|Q|>(p+1)^{p+2}(p-2)$. Let $N(Q)\triangleq \{y\not \in Q|~ \exists x\in Q, (y,x)\in E(G)\}$. Any element of $N(Q)$ cannot be adjacent to
$p-2$ or more vertices of $Q$. Let us consider a bipartite graph $G'$ induced by edges between $Q$ and $N(Q)$. As $G$ has
no clique separators, $Q$ and $N(Q)$ are parts of $G'$. Clearly, the graph $G'$ has a matching with $\lfloor\frac{|Q|}{p-2}\rfloor$ edges,
and it is $K_{p-2,p-2}$-free. Let $N_1\triangleq \{u_1,u_2,\ldots,u_k\}$ be a maximum subset of $Q$ such that $N(Q)$ has vertices
$v_1,v_2,\ldots,v_k$ with $v_i\in N(u_i)\setminus \bigcup\limits_{j\neq i}N(u_j)$ for each $i$. By the previous lemma for $s=p+1$,
$k\geq \lfloor\frac{|Q|}{p-2}\rfloor\geq p+1$. As $p\geq 3$, $N_2\triangleq\{v_1,v_2,\ldots,v_k\}$
must be an independent set or a clique to avoid an induced $P_5$. If $N_2$ is independent, then there is no a vertex $v_i$ having a neighbor
$w\not \in Q\cup N(Q)$. Otherwise, $w$ must be adjacent to all vertices of $N_2$, and $G$ is not $P_5$-free. Hence, a possible
neighbor $w\not \in Q$ of an element $v_i\in N_2$ must belong to $N(Q)$. To avoid an induced $P_5$, $w$ must be adjacent to all elements of $N_2$ or to
$v_i$ only. The second case is realized if and only if $N_1$ has only one neighbor of $w$ coinciding with $u_i$. In the first case,
there are some three non-neighbors $u_{i_1},u_{i_2},u_{i_3}$ of $w$, as $G'$ is $K_{p-2,p-2}$-free. But $v_{i_1},w,v_{i_2},u_{i_2},u_{i_3}$ induce $P_5$. Hence, any possible neighbor $w_i$ of $v_i$ that lies outside $Q$ must be adjacent to $u_i$ and nonadjacent to $u_1,\ldots,u_{i-1},u_{i+1},\ldots,u_k$. Similarly, $N(w_i)\subseteq N(u_i)\cup \{u_i\}$. Hence, $Q$ is a clique separator. Thus, $N_2$ must be a clique.

Let $Q'$ be a maximal clique that includes $N_2$. Suppose that
$v\in N(Q)\setminus Q'$. Since $N_1$ is maximum, $v$ has neighbors in $N_1$, say, $u_1,\ldots,u_q$. Clearly, $q\leq p-3$. To avoid an induced $P_5$, $v$ must
be adjacent to at least $k-q-1$ vertices among $v_{q+1},\ldots,v_k$. Similarly, $v$ must be adjacent to $v_1,\ldots,v_q$. Hence, $v$
is adjacent to at least $k-1$ vertices of $N_2$. To avoid an induced $K_p-e$, $v\in Q'$. Thus, $N(Q)\setminus Q'=\emptyset$.
In fact, $Q'=N(Q)$ and $V(G)=Q\cup N(Q)$, since $N(Q)$ is a clique separator otherwise. So, $G$ is $O_3$-free.\end{proof}

\subsection{Connected prime $\{P_5,\overline{P_5}\}$-free graphs}

A graph is said to be \emph{perfect} if the clique number and the chromatic number are equal
for every its induced subgraph (not necessarily proper). The class of perfect graphs coincides with $Free(\{C_5,\overline{C_5},C_7,\overline{C_7},\ldots\})$,
by the strong perfect graph theorem \cite{CRST06}.

\begin{lemma}
Any connected prime $\{P_5,\overline{P_5}\}$-free graph is perfect or isomorphic to $C_5$.
\end{lemma}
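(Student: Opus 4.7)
The plan is to reduce the statement to showing that a connected prime $\{P_5,\overline{P_5}\}$-free graph which contains an induced $C_5$ must equal $C_5$ itself. By the strong perfect graph theorem quoted above, a graph is perfect iff it contains no induced odd hole or odd anti-hole of length at least five. Any $C_{2k+1}$ with $k\geq 3$ has five consecutive vertices inducing a $P_5$, and dually $\overline{C_{2k+1}}$ contains an induced $\overline{P_5}$; since $\overline{C_5}\cong C_5$, the only odd hole or anti-hole compatible with $\{P_5,\overline{P_5}\}$-freeness is $C_5$. Hence a $\{P_5,\overline{P_5}\}$-free graph is perfect if and only if it contains no induced $C_5$, and it suffices to show that a connected prime $\{P_5,\overline{P_5}\}$-free graph $G$ with an induced 5-cycle $C=v_1v_2v_3v_4v_5v_1$ satisfies $V(G)=V(C)$.

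For a vertex $x\in V(G)\setminus C$ I would first classify $N(x)\cap C$ by inspecting the five-vertex subsets of $C\cup\{x\}$, using the self-complementarity of $C_5$ to halve the casework. The patterns producing an induced $P_5$ or $\overline{P_5}$ are exactly: a single neighbor on $C$; two consecutive neighbors; three neighbors with one cyclically isolated from the other two; and four neighbors. The surviving admissible patterns are $\emptyset$, a non-consecutive pair $\{v_{i-1},v_{i+1}\}$, three consecutive vertices $\{v_{i-1},v_i,v_{i+1}\}$, and $C$ itself, and I would partition $V(G)\setminus C$ into the corresponding classes $A$, $V_i^{(2)}$, $V_i^{(3)}$, and $B$.

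The core of the argument is to show that for each $i$ the set $\{v_i\}\cup V_i^{(2)}$ is a module of $G$ and then to invoke primeness. The required uniform-adjacency verifications fall into several families: no vertex of $A$ is adjacent to any vertex of $\bigcup_j V_j^{(2)}\cup\bigcup_j V_j^{(3)}$ (a $P_5$ drops out on $C$ together with two extra vertices); dually every vertex of $B$ is adjacent to all of them (an explicit $\overline{P_5}$ on $\{v_2,v_5,v_3,y,b\}$-type sets is the witness); and between any two of the partition classes the adjacencies are forced to be all-or-nothing by analogous small-forbidden-subgraph arguments. Once these are in hand, primeness of $G$ forces $\{v_i\}\cup V_i^{(2)}=\{v_i\}$ for every $i$, and parallel module arguments eliminate $V_i^{(3)}$, $A$, and $B$, leaving $V(G)=V(C)$ and $G\cong C_5$.

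The main obstacle is the second step: some adjacencies between classes do not yield a forbidden induced subgraph on $C\cup\{x,y\}$ alone, so one must either introduce a third external witness or pass to $\overline{G}$ via $\overline{C_5}\cong C_5$. Two devices keep the case analysis manageable: organizing it so that complementation systematically halves the cases, and exploiting the fact that whenever $x\in V_i^{(3)}$ the set $\{x\}\cup(C\setminus\{v_i\})$ is itself an induced 5-cycle, which lets the classification be reapplied to the remaining vertices relative to this new $C_5$.
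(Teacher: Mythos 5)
Your overall strategy is the same as the paper's: reduce to the $C_5$ case via the strong perfect graph theorem, classify neighborhoods of external vertices on the 5-cycle, and contradict primeness by exhibiting a nontrivial module. (The paper outsources the classification and the adjacency facts to Fouquet's decomposition theorem; you propose to re-derive them, which is fine.) Your SPGT reduction and your list of admissible neighborhood patterns --- $\emptyset$, a nonadjacent pair $\{v_{i-1},v_{i+1}\}$, three consecutive vertices $\{v_{i-1},v_i,v_{i+1}\}$, and all of $C$ --- are both correct.

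However, your core claim is false: $\{v_i\}\cup V_i^{(2)}$ need not be a module, and the supporting assertion that ``between any two of the partition classes the adjacencies are forced to be all-or-nothing'' fails precisely for the pair $\bigl(V_i^{(2)},V_i^{(3)}\bigr)$ with the same center $i$. Concretely, let $C=v_1v_2v_3v_4v_5$, let $x$ have neighborhood exactly $\{v_5,v_1,v_2\}$ on $C$ (so $x\in V_1^{(3)}$) and let $y$ have neighborhood exactly $\{v_2,v_5\}$ (so $y\in V_1^{(2)}$). Checking all five-vertex subsets of these seven vertices shows that the graph is $\{P_5,\overline{P_5}\}$-free \emph{both} when $xy$ is an edge and when it is not; so no forbidden-induced-subgraph argument (nor a third witness, nor complementation) can force this adjacency, because it is genuinely undetermined. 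In the nonadjacent variant, $x$ is adjacent to $v_1$ but not to $y$, so $\{v_1\}\cup V_1^{(2)}=\{v_1,y\}$ is not a module, and your step ``primeness forces $V_i^{(2)}=\emptyset$'' collapses. Your re-rooting device does not help: relative to the new 5-cycle $(C\setminus\{v_1\})\cup\{x\}$, the vertex $y$ is simply of type 2 or type 3 centered at $x$, consistently in either case. The repair is the one Fouquet and the paper use: take as the candidate module the \emph{union} $V_i=\{v_i\}\cup V_i^{(2)}\cup V_i^{(3)}$, so that the undetermined $V_i^{(2)}$--$V_i^{(3)}$ adjacencies become internal to the module and irrelevant; all adjacencies from outside this union (to $V_{i\pm1}$, $V_{i\pm2}$, to the dominating class $B$, to $A$, and to more distant vertices) are indeed uniform, by arguments of the kind you sketch. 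Then primeness forces each $V_i$ to equal $\{v_i\}$, and the endgame runs as you intended: if some vertex of $B$ remains, $V(C)$ itself is a nontrivial module, while a leftover vertex of $A$ contradicts connectedness.
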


\begin{proof}
Every $\{P_5,\overline{P_5},C_5\}$-free graph is perfect, by the strong perfect graph theorem.
Let $G$ be a connected prime $\{P_5,\overline{P_5}\}$-graph containing an induced $C_5$. Every element of $V(G)\setminus V(C_5)$ is either adjacent to all vertices of $C_5$ or to none of them or to two nonadjacent or to three consecutive \cite{F93}. Let $V_i$ be the set of vertices of $G$ adjacent to the $(i-1)$-th and $(i+1)$-th
vertices of $C_5$ counting modulo 5. Let $V_0$ be the set of vertices adjacent to all vertices of $C_5$.
Any element of $V_i$ is adjacent to each element of $V_0\cup V_{i-1}\cup V_{i+1}$, nonadjacent to any element of $V_{i+2}\cup V_{i-2}$, any element of $V_i\setminus V(C_5)$ cannot have neighbors outside $\bigcup\limits_{i=0}^{5}V_i\cup V(C_5)$ \cite{F93}. Suppose that $G$ is not isomorphic to $C_5$. Then $V_i$ has at least two elements for some $i$ or $V_0\neq \emptyset$ and $|V_1|=|V_2|=|V_3|=|V_4|=|V_5|=1$. The set $V_i$ is a nontrivial module in the first case, and $V(C_5)$ is a nontrivial module in the second one. We have a contradiction with the assumption.
\end{proof}

\section{Main result}

\begin{theorem}
The class $Free(\{P_5,\overline{P_5}\})$ and all classes of the form $Free(\{P_5,K_p-e\})$ are easy for the coloring problem.
\end{theorem}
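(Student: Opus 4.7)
The plan is to dispose of the two asserted classes separately, routing each through one of the two closure theorems of Section~3.

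For $Free(\{P_5,\overline{P_5}\})$ I apply Theorem~4. Let $\mathcal{X}$ be the class consisting of all perfect graphs together with $C_5$; it is hereditary because every proper induced subgraph of $C_5$ is a path, hence perfect, and it is easy for the (weighted) coloring problem, since perfect graphs admit a polynomial-time algorithm by \cite{GLS84} and weighted coloring on the five-vertex graph $C_5$ is trivial. A prime graph on at least four vertices is both connected and co-connected, so Lemma~5 applies and yields that any such prime $\{P_5,\overline{P_5}\}$-free graph is either perfect or equal to $C_5$; the prime graphs on at most three vertices are single vertices or edges and so are already perfect. Hence $Free(\{P_5,\overline{P_5}\})\subseteq[\mathcal{X}]_P$ and Theorem~4 closes this case.

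For $Free(\{P_5,K_p-e\})$ I apply Theorem~2, reducing to a polynomial algorithm on the $C$-blocks, i.e., on connected $\{P_5,K_p-e\}$-free graphs without clique separators. By Lemma~4, each such $C$-block $G$ is either $O_3$-free or satisfies $\omega(G)\le(p+1)^{p+2}(p-2)$. In the first sub-case every independent set has at most two vertices; each two-element color class corresponds to an edge of $\overline{G}$, so $\chi(G)=|V(G)|-\mu(\overline{G})$, computable in polynomial time by Edmonds' matching algorithm. In the second sub-case $\omega(G)$ is bounded by a constant $c(p)$, and $P_5$-free graphs are $\chi$-bounded by Gy\'arf\'as's theorem, so $\chi(G)\le K(p)$ for some constant $K(p)$; the value $\chi(G)$ is then identified by invoking, for each $k\le K(p)$, the known polynomial-time algorithm for $k$-coloring $P_5$-free graphs at fixed $k$.

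The main obstacle is the bounded-$\omega$ sub-case of the $\{P_5,K_p-e\}$-free argument: it forces dependence on external facts about $P_5$-free graphs (their $\chi$-boundedness and polynomial-time $k$-coloring for fixed $k$) that are not developed inside the excerpt, whereas everything else is a clean assembly from the preceding material. A minor secondary subtlety is that the proof of Theorem~4 via Lemma~3 genuinely propagates weights, so one must know weighted coloring (not just ordinary coloring) is polynomial on the target class $\mathcal{X}$ in Part~1; this is fine because both perfect graphs and $C_5$ admit polynomial-time weighted coloring.
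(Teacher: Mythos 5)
Your proposal is correct and takes essentially the same route as the paper: for $Free(\{P_5,K_p-e\})$ it uses clique-separator decomposition together with Lemma 4, Gy\'arf\'as's bound $\chi\le 4^{\omega-1}$ with the fixed-$k$ algorithm for $P_5$-free graphs, and Edmonds' matching algorithm on the complement in the $O_3$-free case; for $Free(\{P_5,\overline{P_5}\})$ it uses modular decomposition together with Lemma 5 and polynomial weighted coloring of perfect graphs. The only deviations are cosmetic: your theorem numbers are shifted relative to the paper's, and you make explicit two points the paper leaves implicit (that the modular-decomposition closure really needs \emph{weighted} coloring to be polynomial on the base class, and the treatment of prime graphs on few vertices).
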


\begin{proof}
It is known that for any $P_5$-free graph $G$ the inequality $\chi(G)\leq 4^{w(G)-1}$ holds \cite{G87}. Moreover, for each fixed $k$,
the $k$-colorability problem can solved in polynomial time for $P_5$-free graphs \cite{HKLSS10}. Hence, by these results, Theorem 1 and Lemma 4,
the coloring problem for $\{P_5,K_p-e\}$-free graphs can be polynomially reduced to the same problem for $O_3$-graphs.
The coloring problem for $O_3$-free graphs is polynomially equivalent to determining the sizes of maximum matchings in the complement graphs. The last problem is known to be polynomial \cite{E65}. Hence, $\{P_5,K_p-e\}$-free graphs constitute an easy class for the coloring problem. The class of perfect graphs is easy for the weighted coloring problem \cite{GLS84}. Perfect graphs can be recognized in polynomial time \cite{CCLSV05}. Hence, by these facts, Theorem 3 and Lemma 5,
$Free(\{P_5,\overline{P_5}\})$ is easy for the coloring problem.
\end{proof}

\end{document}